\DeclareMathOperator{\tr}{\mathrm{tr}}
\DeclareMathOperator{\EEE}{\mathbb{E}}
\DeclareMathOperator{\cov}{\mathrm{Cov}}
\DeclareMathOperator{\C}{\mathbb{C}}
\DeclareMathOperator{\0}{\mathbf{0}} 
\DeclareMathOperator{\F}{\mathbf{F}} 
\DeclareMathOperator{\aaa}{\mathbf{a}}
\DeclareMathOperator{\K}{\mathcal{K}}
\DeclareMathOperator{\E}{\mathbf{E}}
\DeclareMathOperator{\V}{\mathbf{V}}
\DeclareMathOperator{\h}{\mathbf{h}}
\DeclareMathOperator{\HH}{\mathbf{H}}
\DeclareMathOperator{\R}{\mathcal{R}}
\DeclareMathOperator{\G}{\mathbf{G}}
\DeclareMathOperator{\D}{\mathbf{D}}
\DeclareMathOperator{\CN}{\mathcal{CN}}
\DeclareMathOperator{\A}{\mathbf{A}}
\DeclareMathOperator{\B}{\mathbf{B}}
\DeclareMathOperator{\N}{\mathbf{N}}
\DeclareMathOperator{\NN}{\mathcal{N}}
\DeclareMathOperator{\e}{\mathbf{e}}
\DeclareMathOperator{\W}{\mathbf{W}}
\DeclareMathOperator{\II}{\mathbf{I}}
\DeclareMathOperator{\rr}{\mathbf{r}}
\DeclareMathOperator{\x}{\mathbf{x}}
\DeclareMathOperator{\y}{\mathbf{y}}
\DeclareMathOperator{\s}{\mathbf{s}}
\DeclareMathOperator{\n}{\mathbf{n}}
\DeclareMathOperator{\Q}{\mathbf{Q}}
\DeclareMathOperator{\q}{\mathbf{q}}
\DeclareMathOperator{\g}{\mathbf{g}}
\DeclareMathOperator{\Y}{\mathbf{Y}}
\DeclareMathOperator{\pphi}{\boldsymbol{\phi}}
\DeclareMathOperator{\VARPHI}{\boldsymbol{\varphi}}
\DeclareMathOperator{\Cov}{\mathrm{Cov}}
\newsavebox\myboxA
\newsavebox\myboxB
\newlength\mylenA
\newcommand*\mybar[2][0.66]{%
	\sbox{\myboxA}{$\m@th#2$}%
	\setbox\myboxB\null
	\ht\myboxB=\ht\myboxA%
	\dp\myboxB=\dp\myboxA%
	\wd\myboxB=#1\wd\myboxA
	\sbox\myboxB{$\m@th\overline{\copy\myboxB}$}
	\setlength\mylenA{\the\wd\myboxA}
	\addtolength\mylenA{-\the\wd\myboxB}%
	\ifdim\wd\myboxB<\wd\myboxA%
	\rlap{\hskip 0.5\mylenA\usebox\myboxB}{\usebox\myboxA}%
	\else
	\hskip -0.5\mylenA\rlap{\usebox\myboxA}{\hskip 0.5\mylenA\usebox\myboxB}%
	\fi}
\tikzset{
bs/.pic = {                                      
    \draw[line width = 1pt,-round cap] (0,0.4\R) -- (-0.2\R,-0.2\R);
    \draw[line width = 1pt,-round cap] (0,0.4\R) -- (0.2\R,-0.2\R);
    \draw[line width = 1pt] (-0.2\R,-0.2\R) -- (0.133\R,0);
    \draw[line width = 1pt] (0.2\R,-0.2\R) -- (-0.133\R,0);
    \draw[line width = 1pt,-round cap] (-0.133\R,0) -- (0.067\R,0.2\R);
    \draw[line width = 1pt,-round cap] (0.133\R,0) -- (-0.067\R,0.2\R);
    \draw[line width = 1pt,-round cap] (-0.213\R,0.4\R) -- (0.213\R,0.4\R);
    \draw[line width = 1pt,-round cap] \foreach \x in {-0.21, -0.14,...,0.22} {(\x\R,0.4\R) -- (\x\R,0.47\R)};
    \node (dim) at (0,0)  [align=center,minimum width=0.5\R,minimum height=1\R] {};
    },
}
\tikzset{
 user/.pic = {     
    \draw[rounded corners=0.02\R] (-0.09\R,-0.17\R) rectangle (0.09\R,0.17\R) ;                     
    \draw[fill=gray] (-0.08\R,-0.12\R) rectangle (0.08\R,0.12\R);                          
    \draw[rounded corners=0.005\R] (-0.04\R,0.14\R) rectangle (0.04\R,0.15\R);                     
    \draw (0,-0.14\R) circle (0.012\R) ; 
    \node (dim) at (0,0)  [align=center,minimum width=0.2\R,minimum height=0.3\R] {};
    }
}
\newtheorem{theorem}{Theorem}
\newtheorem{lemma}{Lemma}
\newtheorem{remark}{Remark}
\begin{document}
\bstctlcite{IEEEexample:BSTcontrol}
\fontsize{10}{12}\rm

\title{
User-to-User Interference Mitigation in Dynamic TDD MIMO Systems with Multi-Antenna Users
 \vspace{-2mm}
 \thanks{This work was partially supported by the Wallenberg AI, Autonomous Systems and Software Program (WASP) funded by the Knut and Alice Wallenberg Foundation, and partially supported by ELLIIT.}
 \thanks{Tung T. Vu was with the Department of Electrical Engineering (ISY), Link\"{o}ping University, 581 83 Link\"{o}ping, Sweden.
}
 }

\author{
\IEEEauthorblockN{
Martin Andersson\IEEEauthorrefmark{1},
Tung T. Vu\IEEEauthorrefmark{2},
Pål Frenger\IEEEauthorrefmark{3},
Erik G. Larsson\IEEEauthorrefmark{1}
}
\IEEEauthorblockA{\small\IEEEauthorrefmark{1}Department of Electrical Engineering (ISY), Link\"{o}ping University, 581 83 Link\"{o}ping, Sweden}
\IEEEauthorblockA{\small\IEEEauthorrefmark{2}School of Engineering, Macquarie University, Australia}
\IEEEauthorblockA{\small\IEEEauthorrefmark{3}Ericsson Research, 583 30 Link\"{o}ping, Sweden}

Email: martin.b.andersson@liu.se, thanhtung.vu@mq.edu.au, 
pal.frenger@ericsson.com,
erik.g.larsson@liu.se
\vspace{-4mm}
}

\begin{figure*}[t!]
    \vspace{-20cm}
    \textcopyright~2024 IEEE. Personal use of this material is permitted. Permission from IEEE must be obtained for all other uses, in any current or future media, including reprinting/republishing this material for advertising or promotional purposes, creating new collective works, for resale or redistribution to servers or lists, or reuse of any copyrighted component of this work in other works.
\end{figure*}

\maketitle
\allowdisplaybreaks
\vspace{-0mm}
\begin{spacing}{1}
\begin{abstract}
We propose a novel method for user-to-user interference (UUI) mitigation in dynamic time-division duplex multiple-input multiple-output communication systems with multi-antenna users. Specifically, we consider the downlink data transmission in the presence of UUI caused by a user that simultaneously transmits in uplink. Our method introduces an overhead for estimation of the user-to-user channels by transmitting pilots from the uplink user to the downlink users. Each downlink user obtains a channel estimate that is used to design a combining matrix for UUI mitigation. We analytically derive an achievable spectral efficiency for the downlink transmission in the presence of UUI with our mitigation technique. Through numerical simulations, we show that our method can significantly improve the spectral efficiency performance in cases of heavy UUI.
\end{abstract}
\end{spacing}

\begin{IEEEkeywords}
MIMO communication, dynamic TDD, user-to-user interference, interference mitigation, spectral efficiency
\end{IEEEkeywords}

\vspace{-0mm}
\section{Introduction}
\vspace{-0mm}
\label{sec:Introd}

Dynamic time-division duplex (DTDD) is a promising duplexing scheme for future multiple-input multiple-output (MIMO) communication systems \cite{Chowdhury2022TC, kim22VTC}. In DTDD, the time allocation for uplink (UL) and downlink (DL) transmissions, respectively, can be adapted dynamically in each coherence interval. Thus, the time allocation in DTDD can be optimized for the current network demands continuously, as opposed to current static TDD schemes where it is fixed. Moreover, DTDD lets each access point (AP) in the network choose its own UL and DL allocation, which can further increase data rates and decrease latencies. A special case of DTDD, which we consider in this paper, assigns each AP to either UL or DL throughout each coherence interval. In this way, a network-assisted (virtual) full-duplex network is realized without expensive full-duplex hardware \cite{fukue22A,Mohammadu2022SPAWC, Mohammadi2023JSAC}.

Although the potential of DTDD has been proven, there are still issues to overcome to reach it. Mainly, the cross-link interference (CLI) that is introduced in DTDD due to the simultaneous transmission in UL and DL, can limit the performance significantly \cite{KimCST}. Hence, CLI mitigation is a vital problem to solve. Most current studies on CLI mitigation for DTDD focus on the AP-to-AP interference \cite{Andersson2024ICASSP, deOlivindo2018WCL, daSilva2021WC}.
In contrast, the research on UUI mitigation is limited.

Furthermore, deploying multi-antenna users has shown the potential to significantly increase spectral efficiencies (SEs) in MIMO systems \cite{Li2016ICT, Mai2018GC, Sutton2021TVT}, and are thus likely to remain in future communication technologies. In this work, we aim to combine the advantages of multi-antenna users and DTDD. Specifically, we exploit the multiple user antennas to develop beamforming techniques for UUI mitigation. 

\textit{Paper contributions:} We aim to answer the question: \textit{"When is the DL SE improved by mitigating UUI in dynamic TDD MIMO systems with multi-antenna users?"} To this end, we consider a MIMO network with two APs, one operating in DL and one in UL. The DL AP transmits data to DL users in the presence of a UL user that simultaneously transmits to the UL AP, which causes UUI. We aim to enhance the SEs for the DL users by mitigating the UUI. The main contributions of this work are summarized as follows:
\begin{itemize}
    \item We propose a DTDD scheme that allocates a fraction of each coherence interval for user-to-user channel estimation. Importantly, our scheme does not increase the total required channel estimation overhead.
    \item We use the obtained channel estimates to design a combining matrix for each DL user, and apply this combiner to mitigate UUI.
    \item We derive an SE expression for the DL transmission in the presence of UUI that is achievable with our UUI mitigation technique.
    \item Numerical results show that our proposed method has the potential to remarkably improve the DL SEs in scenarios with vast UUI.
\end{itemize}

\section{System Model}
\label{sec:SystemModel}

We consider a MIMO communication system under a DTDD operation, where two $M$-antenna APs, one operating in DL and one in UL, serve users. The DL AP transmits signals to $K$ DL users, each of which equipped with $N_\text{d}$ antennas. Simultaneously, the UL AP receives the signal transmitted by a UL user equipped with $N_\text{u}$ antennas. The UL user unintentionally causes UUI to the DL users. The system model is visualized in Fig. \ref{Fig:system}.

\begin{remark}
    We emphasize that this system model is more general than the explicit system scenario herein suggests. The theory can cover other scenarios with minor modifications, such as the following. The antennas of the DL and UL APs do not have to be co-located, the $M$ antennas can be  
    distributed arbitrarily and connected through a backhaul network. Also, the $N_{\text{u}}$ UL antennas can be divided among several UL users.
\end{remark}

\begin{figure}[!t]
    \centering
    \begin{tikzpicture}
        \newdimen\R
        \R=1.75cm
        \draw (0.7,-2.3) pic (ulue) {user} node [xshift=-6mm, yshift=8mm] {} ;
        \draw (4,-0.5) pic (dlap) {bs} ;
        \draw (4,-2.3) pic (ulap) {bs};
        \draw (0,0) pic (dlue1) {user} node [yshift=-5mm]{$1$} ;
        \draw (0.7,-0.1) pic (dlue2) {user} node [yshift=-5mm]{$k$} ;
        \draw (2,0) pic (dlue3) {user}  node [yshift=-5mm]{$K$} ;
        \draw[->,black,dashed,thick,shorten >= 0.2cm] (0.6,-1.9) to (0,-0.5) ;
        \draw[->,black,dashed,thick,shorten >= 0.2cm] (0.7,-1.9) to (0.7, -0.6) ;
        \draw[->,black,dashed,thick,shorten >= 0.2cm] (0.8,-1.9) to (2, -0.5) ;
        \draw[<-,black,dashed,thick,shorten >= 0.2cm] (dlapdim.south) to (4,-0.65) ;
        \draw[<-,thick,shorten >= 0.2cm] (ulapdim.west) to node []{}(uluedim.east) ;
        
        \draw[thick, red] (0.7,-1.4) ellipse (0.75cm and 0.32cm) node [xshift=-17mm,yshift=1mm]{\begin{tabular}{c} $\HH_1,\dots,\HH_K$ \\ $\in \C^{N_\text{d} \times N_\text{u}}$ \end{tabular}};
        \draw[thick, red] (0.7,-1.4) ellipse (0.75cm and 0.32cm) node [xshift=12mm]{UUI};
        \draw[thick, blue, rotate=-30] (1.5,2) ellipse (1cm and 0.2cm) node [xshift=-27mm, yshift=4mm]{$\G_1,\dots,\G_K \in \C^{M \times N_\text{d}}$};
        
        \draw[black,thick,->] (dlapdim.north) to [out=135,in=45] node [align=center,midway,xshift=0mm,yshift=0mm]{} (0,0.4) ;
        \draw[black,thick,->] (dlapdim.north) to [out=135,in=45] node [align=center,midway,xshift=0mm,yshift=0mm]{} (0.7,0.3) ;
        \draw[black,thick,->] (dlapdim.north) to [out=135,in=45] node [align=center,midway,xshift=0mm,yshift=0mm]{} (2,0.4) ;
        
        \node  [align=center,left=0.01mm of uluedim.west] {UL user: \\ $N_\text{u}$ antennas};
        \node  [align=center,right=0.01mm of dlapdim.east] {DL AP: \\ $M$ antennas};
        \node  [align=center,left=0.05mm of dlue1dim.west] {DL users: \\ $N_\text{d}$ antennas};
        \node  [align=center,right=0.05mm of ulapdim.east] {UL AP: \\ $M$ antennas};         
        \node at (1.2,-0.08)[circle,fill,inner sep=0.8pt]{};
        \node at (1.35,-0.06)[circle,fill,inner sep=0.8pt]{};
        \node at (1.5,-0.04)[circle,fill,inner sep=0.8pt]{};
    \end{tikzpicture}
    \caption{Illustration of the considered DTDD MIMO system. Solid lines represent intended data signals and dashed lines are interference signals.}
    \label{Fig:system}
\end{figure}
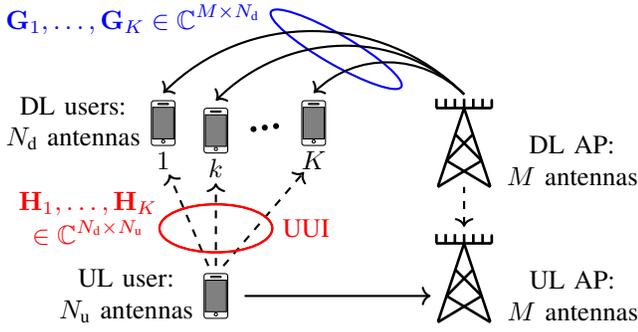

Let $\g_{kn} \in \C^{M \times 1},\ k \in \K \triangleq \{1, \dots, K\},\ n \in \NN_{\text{d}} \triangleq \{1, \dots, N_{\text{d}} \}$ denote the channel vector from the DL AP to antenna $n$ of DL user $k$. Similarly, define the channel vector from antenna $n$ of the UL user to DL user $k$ as $\h_{kn} \in \C^{N_{\text{d}} \times 1},\ k \in \K,\ n \in \NN_{\text{u}}\triangleq \{1,\dots,N_{\text{u}}\}$.
For each DL user $k$, let the collective channel matrices be
\begin{align}
    \G_k & = 
    \begin{bmatrix}
        \g_{k1} & \dots & \g_{kN_{\text{d}}}
    \end{bmatrix}
    \in \C^{M \times N_\text{d}},\ k \in \K
    \\
    \HH_k & =
    \begin{bmatrix}
        \h_{k1} & \dots & \h_{kN_{\text{u}}}
    \end{bmatrix}
    \in \C^{N_\text{d} \times N_\text{u}},\ k \in \K.
\end{align}

The uncorrelated Rayleigh fading model is considered for all channels, i.e., 
\begin{align}
    \G_k & = \sqrt{\alpha_k} \tilde{\G}_k, \ k \in \K
    \\
    \HH_k & = \sqrt{\beta_k} \tilde{\HH}_k, \ k \in \K,
\end{align}
where $\alpha_k$ and $\beta_k$ are the large-scale fading coefficients from the DL AP and the UL user, respectively, to DL user $k$. The matrices $\tilde{\G}_k \in \C^{M \times N_\text{d}}$ and $\tilde{\HH}_k \in \C^{N_\text{d} \times N_\text{u}}$ represent the small-scale fading and consist of independent and identically distributed (i.i.d.) $\CN(0,1)$ entries. All channels stay constant throughout one coherence interval of $\tau_{\text{c}}$ samples. 

\section{Proposed DTDD Scheme for UUI Mitigation}

We propose a DTDD scheme for UUI mitigation including the following steps in each coherence interval:

\begin{itemize}
    \item \textit{UL channel estimation}: During the first $\tau_{\text{p1}}$ samples, all $K$ DL users transmit UL pilots to estimate the user-to-AP channels $\{ \G_k \}$.
    \item \textit{UUI channel estimation}: The subsequent $\tau_{\text{p2}}$ samples are allocated to estimation of the UUI channels $\{ \HH_k \}$. The UL user transmits pilots to all $K$ DL users. These pilots are also received by the UL AP and are used to estimate the channel between the UL AP and the UL user.\footnote{Note that we focus on the DL transmission in this work. We do not explicitly consider the channel from the UL user to the UL AP, although this channel must be estimated in practice.}
    \item \textit{Data transmission}: The remaining $\tau_{\text{d}} \triangleq \tau_{\text{c}} - \tau_{\text{p1}} - \tau_{\text{p2}}$ samples are used for UL and DL data transmission.
\end{itemize}
Importantly, this scheme does not increase the total channel estimation overhead, since the $\tau_{\text{p2}}$ pilot samples would anyway be necessary to estimate all user-to-AP channels.
The tasks of the UL and DL users and APs in this DTDD scheme are presented in Fig. \ref{Fig:DTDD}, and are discussed in detail hereafter.

\begin{figure}[t!]
    \centering
    \begin{tikzpicture}[scale=0.8]
        \draw[thick] (1,0) -- (10,0);
        \draw[thick] (1,-1) -- (10,-1);
        \draw[thick] (1,0) -- (1,-1);
        \draw[thick] (10,0) -- (10,-1);
        \draw[thick] (4.7,0) -- (4.7,-1);
        \draw[thick] (3,0) -- (3,-1);

        \node at (-0.1,-0.5) {UL user:};
        \node[align=center, font=\small] at (2,-0.5) { Idle };
        \node[align=center, font=\small] at (3.85,-0.5) { Pilot \\ Trans. };
        \node at (7.25,-0.5) {UL Data Transmission};

        \draw[thick] (1,-2) -- (10,-2);
        \draw[thick] (1,-1) -- (1,-2);
        \draw[thick] (10,-1) -- (10,-2);
        \draw[thick] (4.7,-1) -- (4.7,-2);
        \draw[thick] (3,-1) -- (3,-2);

        \node at (-0.1,-1.5) {UL AP:};
        \node[align=center, font=\small] at (2,-1.5) {Idle};
        \node[align=center, font=\small] at (3.85,-1.5) {UL \\ Ch. Est.};
        \node at (7.25,-1.5) {UL Data Reception};

        \draw[thick] (1,-3) -- (10,-3);
        \draw[thick] (1,-2) -- (1,-3);
        \draw[thick] (10,-2) -- (10,-3);
        \draw[thick] (4.7,-2) -- (4.7,-3);
        \draw[thick] (3,-2) -- (3,-3);

        \node at (-0.1,-2.5) {DL users:};
        \node[align=center, font=\small] at (2,-2.5) { Pilot \\ Trans.};
        \node[align=center, font=\small] at (3.85,-2.5) {UUI \\ Ch. Est.};
        \node at (7.25,-2.5) {DL Data Reception};

        \draw[thick] (1,-4) -- (10,-4);
        \draw[thick] (1,-3) -- (1,-4);
        \draw[thick] (10,-3) -- (10,-4);
        \draw[thick] (4.7,-3) -- (4.7,-4);
        \draw[thick] (3,-3) -- (3,-4);

        \node at (-0.1,-3.5) {DL AP:};
        \node[align=center, font=\small] at (2,-3.5) {UL Channel \\ Estimation};
        \node[align=center, font=\small] at (3.85,-3.5) {Idle};
        \node at (7.25,-3.5) {DL Data Transmission};

        \draw [decorate,
        decoration = {calligraphic brace, raise=3pt, amplitude=3pt}, thick] (1.05,0) --  (2.95,0) node[pos=0.5,above=5pt,black]{$\tau_{\text{p1}}$};
        \draw [decorate,
        decoration = {calligraphic brace, raise=3pt, amplitude=3pt}, thick] (3.05,0) --  (4.65,0) node[pos=0.5,above=5pt,black]{$\tau_{\text{p2}}$};
        \draw [decorate, decoration = {calligraphic brace, raise=3pt, amplitude=3pt}, thick] (4.75,0) --  (9.95,0) node[pos=0.5,above=5pt,black]{$\tau_{\text{d}}$};

        \draw [decorate, decoration = {calligraphic brace, raise=3pt, amplitude=3pt, mirror}, thick] (1.05,-4) -- (9.95,-4) node[pos=0.5,below=5pt,black]{$\tau_{\text{c}}$};

        \draw[thick,->] (1, -5) to (10, -5) node[above, xshift=-5mm]{Time} ;
    \end{tikzpicture}
    \caption{Summary of the different phases and tasks performed by the UL and DL users and APs in each coherence interval with our DTDD scheme.}
    \label{Fig:DTDD}
\end{figure}
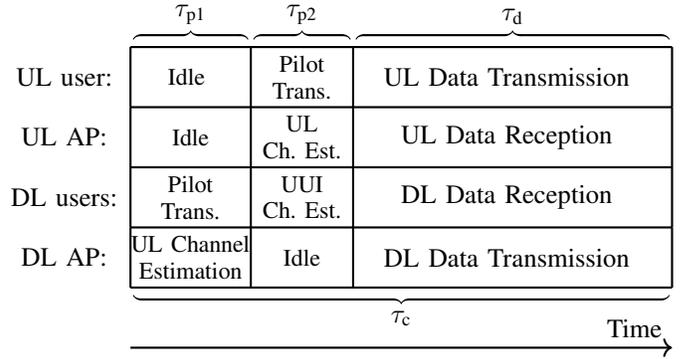

\subsection{Uplink Channel Estimation}

We use $\tau_{\text{p1}}$ symbols in each coherence interval for UL training by letting all DL users transmit pilot sequences to the DL AP. Let $\sqrt{\tau_{\text{p1}}} \pphi_{kn} \in \C^{\tau_{\text{p1}} \times 1}$ be the pilot sequence transmitted by antenna $n \in \NN_{\text{d}}$ of DL user $k \in \K$. We consider mutually orthogonal pilot sequences, i.e., $\pphi_{kn}^H \pphi_{k'n'} = 0$ if $(k,n) \neq (k',n')$ and $\pphi_{kn}^H \pphi_{kn} = 1$, where $\tau_{\text{p1}} \geq K N_{\text{d}}$. During the pilot transmission phase, the DL AP receives the $M \times \tau_{\text{p1}}$ signal $\Y_{\text{p1}}$, which can be expressed as
\begin{align}
    \Y_{\text{p1}} = \sqrt{\tau_{\text{p1}} \rho_{\text{u}}} \sum_{k \in \K} \sum_{n \in \NN_{\text{d}}} \g_{kn} \pphi_{kn}^H + \N_{\text{p1}},
\end{align}
where $\rho_{\text{u}}$ is the UL transmit power normalized by the noise power. The receiver noise $\N_{\text{p1}}$ consists of i.i.d. $\CN(0,1)$ elements. Then, de-spreading is applied by correlating $\Y_{\text{p1}}$ with each of the pilot sequences $\pphi_{kn}$ to obtain
\begin{align}
    \label{yp1}
    \nonumber
    \y_{\text{p1},kn} & \triangleq \Y_{\text{p1}} \pphi_{kn} 
    \\
    & = 
    \sqrt{\tau_{\text{p1}} \rho_{\text{u}}} \g_{kn} + \n_{\text{p1},kn},\ k \in \K,\ n \in \NN_{\text{d}},
\end{align}
where $\n_{\text{p1},kn} \sim \CN(\0_{M \times 1},\II_M)$. From \eqref{yp1}, the minimum mean square error (MMSE) estimate of the channel $\g_{kn}$ is \cite{ngo17TWC}
\begin{align}
    \hat\g_{kn} = \dfrac{\sqrt{\tau_{\text{p1}} \rho_{\text{u}}} \alpha_{k}}{1 + \tau_{\text{p1}} \rho_{\text{u}} \alpha_{k}} \y_{\text{p1},kn}, \ k \in \K,\ n \in \NN_{\text{d}}.
\end{align}
The collective channel estimate matrix for each DL user $k$ is:
\begin{align}
    \hat\G_k & = 
    \begin{bmatrix}
        \hat\g_{k1} & \dots & \hat\g_{kN_{\text{d}}}
    \end{bmatrix}
    \in \C^{M \times N_\text{d}},\ k \in \K.
\end{align}

\subsection{User-to-User Channel Estimation}

We follow the same steps for the DL users to estimate the UUI channels. Mutually orthogonal pilot sequences are used with $\tau_{\text{p2}} \geq N_{\text{u}}$. Let $\sqrt{\tau_{\text{p2}}} \VARPHI_{n} \in \C^{\tau_{\text{p2}} \times 1}$ be the pilot sequence transmitted by antenna $n \in \NN_{\text{u}}$ of the UL user, where $\VARPHI_n^H \VARPHI_{n'} = 0$ for $n \neq n'$ and $\VARPHI_n^H \VARPHI_n = 1$. The received $N_{\text{d}} \times \tau_{\text{p2}}$ pilot signal at DL user $k \in \K$ can be expressed as
\begin{align}
    \Y_{\text{p2},k} = \sqrt{\tau_{\text{p2}} \rho_{\text{u}}} \sum_{n \in \NN_{\text{u}}} \h_{kn} \VARPHI_{n}^H + \N_{\text{p2},k},
\end{align}
where the entries of $\N_{\text{p2},k}$ are i.i.d. $\CN(0,1)$. De-spreading is applied on $\Y_{\text{p2},k}$ to get
\begin{align}
    \nonumber
    \y_{\text{p2},kn} & \triangleq \Y_{\text{p2},k} \VARPHI_{n} 
    \\
    & = \sqrt{\tau_{\text{p2}} \rho_{\text{u}}} \h_{kn} + \n_{\text{p2},kn},\ k \in \K,\ n \in \NN_{\text{u}},
\end{align}
where $\n_{\text{p2},kn} \sim \CN(\0_{N_{\text{d}} \times 1}, \II_{N_{\text{d}}})$. The MMSE estimate of the UUI channel $\h_{kn}$ is \cite{ngo17TWC}
\begin{align}
    \hat\h_{kn} = \dfrac{\sqrt{\tau_{\text{p2}} \rho_{\text{u}}} \beta_{k}}{1 + \tau_{\text{p2}} \rho_{\text{u}} \beta_{k}} \y_{\text{p2},kn}, \ k \in \K,\ n \in \NN_{\text{u}}.
\end{align}
The collective channel estimate matrix for each user is then:
\begin{align}
    \hat\HH_k & = 
    \begin{bmatrix}
        \hat\h_{k1} & \dots & \hat\h_{kN_{\text{u}}}
    \end{bmatrix}
    \in \C^{N_{\text{d}} \times N_{\text{u}}},\ k \in \K.
\end{align}

\subsection{Downlink Data Transmission}
Let $\q_k \in \C^{N_{\text{d}} \times 1}$, be the vector of information symbols intended for DL user $k \in \K$, where $\q_k \sim \CN(\0_{N_{\text{d}} \times 1}, \II_{N_{\text{d}}})$ and all $\{ \q_k \}$ are mutually uncorrelated.
Define $\W_k \in \C^{M \times N_{\text{d}}}$ to be the DL precoding matrix 
for the same user. The signal transmitted from the DL AP is
\begin{align}
    \x = \sum_{k \in \K} \sqrt{\rho_{\text{d}}} \W_k \q_k \in \C^{M\times 1},
\end{align}
where $\rho_{\text{d}}$ denotes the normalized DL transmit power. Also, let $\sqrt{\rho_{\text{u}}} \s \in \C^{N_{\text{u}} \times 1}$ be the signal transmitted by the UL user, where $\EEE \{ \s \} = \0_{N_{\text{u}} \times 1}$ and all elements of $\s$ are mutually uncorrelated and have unit variance. The DL users receive the intended signals in the presence of UUI caused by the signal transmitted by the UL user. For DL user $k \in \K$, the received signal is
\begin{align}
    \y_{k} & \triangleq 
    \G_k^T \x + \sqrt{\rho_{\text{u}}} \HH_{k} \s + \n_{k} \in \C^{N_{\text{d}}\times 1},
\end{align}
where the receiver noise $\n_k$ consists of i.i.d. $\CN(0,1)$ elements. 

A combining matrix $\V_k \in \C^{N_{\text{d}} \times N_{\text{d}}}$ is applied for each DL user $k \in \K$ to obtain the following signal for data detection:
\begin{align}
    \nonumber
    \rr_k & \triangleq \V_k \y_k
    \\
    \nonumber
    & = \sqrt{\rho_{\text{d}}} \V_k \G_k^T \W_k \q_k 
    \\
    \nonumber
    & \quad + \sqrt{\rho_{\text{d}}} \sum_{k' \in \K \setminus \{ k \}} \V_k \G_k^T \W_{k'} \q_{k'}
    \\
    \nonumber
    & \quad + \sqrt{\rho_{\text{u}}} \V_k \HH_{k} \s + \V_k \n_{k}
    \\
    & = \F_k \q_k + \e_k,
\end{align}
where
\begin{align}
    \label{Fk}
    \F_k \triangleq \sqrt{\rho_{\text{d}}} \EEE \Big\{  \V_k \G_k^T \W_k \Big\}
\end{align}
is the average effective channel and
\begin{align}
    \nonumber
    \e_k & \triangleq \Bigg( \sqrt{\rho_{\text{d}}} \V_k \G_k^T \W_k - \EEE \Big\{ \sqrt{\rho_{\text{d}}} \V_k \G_k^T \W_k \Big\} \Bigg) \q_k
    \\
    \nonumber
    & \quad + \sqrt{\rho_{\text{d}}} \sum_{k' \in \K \setminus \{ k \}} \V_k \G_k^T \W_{k'} \q_{k'}
    \\
    & \quad + \sqrt{\rho_{\text{u}}} \V_k \HH_{k} \s + \V_k \n_{k}
\end{align}
is the effective noise. The mean and the covariance matrix of the effective noise term $\e_k$ are given in the following Lemma.
\begin{lemma}
    \label{lemma:effnoise}
    The effective noise $\e_k$ has zero mean, i.e., $\EEE \{ \e_k \} = \0_{N_{\text{d}} \times 1}$
    and its covariance matrix $\E_k \in \C^{N_{\text{d}} \times N_{\text{d}}}$ is given by
    \begin{align}
        \label{Ek}
        \nonumber
        \E_k & \triangleq \Cov \{ \e_k \}
        \\
        \nonumber
        & = \rho_{\text{d}} \EEE \Bigg\{ \V_k \G_k^T \left( \sum_{k' \in \K} \W_{k'} \W_{k'}^H \right) \G_k^* \V_k^H \Bigg\}
        \\
        \nonumber
        & \quad - \rho_{\text{d}} \EEE \{ \V_k \G_k^T \W_k \} \EEE \{ \W_k^H \G_k^* \V_k^H \}
        \\
        & \quad + \EEE \{ \V_k ( \rho_{\text{u}} \HH_k \HH_k^H + \II_{N_{\text{d}}} ) \V_k^H \}.
    \end{align}
\end{lemma}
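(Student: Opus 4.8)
The plan is to split $\e_k$ into its four additive contributions and treat them separately, exploiting that the data symbols $\{\q_{k'}\}$, the uplink signal $\s$, and the receiver noise $\n_k$ are all zero-mean, mutually independent, and independent of every channel-dependent factor (the true channels $\G_k,\HH_k$ as well as the estimates $\hat\G_{k'},\hat\HH_k$ that determine $\W_{k'}$ and $\V_k$). Concretely, I would write $\e_k = T_1 + T_2 + T_3 + T_4$, where $T_1 = \big(\sqrt{\rho_{\text{d}}}\V_k\G_k^T\W_k - \EEE\{\sqrt{\rho_{\text{d}}}\V_k\G_k^T\W_k\}\big)\q_k$, $T_2 = \sqrt{\rho_{\text{d}}}\sum_{k'\in\K\setminus\{k\}}\V_k\G_k^T\W_{k'}\q_{k'}$, $T_3 = \sqrt{\rho_{\text{u}}}\V_k\HH_k\s$, and $T_4 = \V_k\n_k$.

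For the zero-mean claim, I would note that in each $T_i$ the channel-dependent matrix factor is independent of the trailing zero-mean vector, so conditioning on the channels and applying $\EEE\{\q_{k'}\}=\0$, $\EEE\{\s\}=\0$, $\EEE\{\n_k\}=\0$ gives $\EEE\{T_i\}=\0$, and hence $\EEE\{\e_k\}=\0$. Since $\e_k$ then has zero mean, $\Cov\{\e_k\}=\EEE\{\e_k\e_k^H\}$, which I would expand into the sixteen products $\EEE\{T_iT_j^H\}$. Every off-diagonal product pairs two distinct members of $\{\q_k,\q_{k'},\s,\n_k\}$; conditioning on the channels reduces the relevant expectation to a cross-covariance such as $\EEE\{\q_k\q_{k'}^H\}=\0$ for $k'\neq k$, or $\EEE\{\q_k\s^H\}=\0$, so all cross terms vanish and only $\sum_i\EEE\{T_iT_i^H\}$ survives.

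For the diagonal terms I would again condition on the channels and substitute the identity covariances $\EEE\{\q_k\q_k^H\}=\II_{N_{\text{d}}}$, $\EEE\{\s\s^H\}=\II_{N_{\text{u}}}$, $\EEE\{\n_k\n_k^H\}=\II_{N_{\text{d}}}$. Writing $C=\sqrt{\rho_{\text{d}}}\V_k\G_k^T\W_k$, the term $T_1$ yields $\EEE\{(C-\EEE\{C\})(C-\EEE\{C\})^H\}=\EEE\{CC^H\}-\EEE\{C\}\EEE\{C\}^H$, producing the $k'=k$ summand of the first line of \eqref{Ek} together with the negative term on the second line; $T_2$ contributes the remaining $k'\neq k$ summands (the cross terms in $k'$ vanish by the same covariance argument), so $T_1$ and $T_2$ together complete the sum over all $k'\in\K$; and $T_3,T_4$ combine into $\EEE\{\V_k(\rho_{\text{u}}\HH_k\HH_k^H+\II_{N_{\text{d}}})\V_k^H\}$, matching the third line of \eqref{Ek}.

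The only delicate point — and the step I would handle most carefully — is the independence bookkeeping that drives the conditioning argument: one must verify that $\V_k$ depends on the channels solely through the pilot-based estimate $\hat\HH_k$ and $\W_{k'}$ solely through $\{\hat\G_{k'}\}$, so that neither inherits any dependence on the payload $\{\q_{k'}\}$, the uplink data $\s$, or the noise $\n_k$. In particular, $\hat\HH_k$ is formed from the pilots $\VARPHI_n$ rather than from $\s$, which is precisely what lets $T_3$ factor cleanly. Once this independence structure is secured, every expectation collapses to a conditional one in which the data and noise vectors contribute only their identity (or zero) covariances, and the stated expression for $\E_k$ follows by collecting terms.
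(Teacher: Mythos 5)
Your proposal is correct and follows essentially the same route as the paper's proof: establish zero mean from the independence of the zero-mean vectors $\q_{k'},\s,\n_k$ from the channel-dependent matrices, show the four additive terms are uncorrelated so the covariance splits, and evaluate each term by conditioning on the matrices and substituting identity covariances (the paper phrases this via $\Cov\{\mathbf{A}\mathbf{x}\}=\EEE\{\Cov\{\mathbf{A}\mathbf{x}\,|\,\mathbf{A}\}\}=\EEE\{\mathbf{A}\mathbf{A}^H\}$). Your extra care about why $\V_k$ and $\W_{k'}$ inherit no dependence on the payload or noise is a sound elaboration of an independence assumption the paper states without further justification.
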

\begin{proof}
    See Appendix \ref{Derivation_Ek}.
\end{proof}
The data symbols $\q_k$ are detected from $\rr_k$. An achievable SE for each DL user is presented in the following Theorem.
\begin{theorem}
    \label{theorem:DLSE}
    An achievable SE for DL user $k \in \K$ is
    \begin{align}
        \label{SE_dl}
        \mathtt{SE}_{k} = \dfrac{\tau_{\text{d}}}{\tau_{\text{c}}} \log_2 \left( \det \left( \II_{N_{\text{d}}} + \F_k^H \E_k^{-1} \F_k  \right) \right) \text{ bits/s/Hz},
    \end{align}
    where $\F_k$ and $\E_k$ are given by \eqref{Fk} and \eqref{Ek}, respectively.
\end{theorem}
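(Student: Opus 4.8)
The plan is to establish \eqref{SE_dl} as a standard use-and-then-forget capacity lower bound, in which the deterministic average effective channel $\F_k$ plays the role of a known channel gain and everything else is absorbed into the effective noise $\e_k$. The pre-log factor $\tau_{\text{d}}/\tau_{\text{c}}$ simply accounts for the fraction of each coherence interval spent on data transmission, so the substance of the argument is the per-sample mutual information of the effective single-user MIMO channel $\rr_k = \F_k \q_k + \e_k$ with Gaussian input $\q_k \sim \CN(\0_{N_{\text{d}} \times 1}, \II_{N_{\text{d}}})$.

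First I would verify the two structural properties that make the bound valid. By Lemma~\ref{lemma:effnoise}, $\e_k$ is zero-mean with covariance $\E_k$. Next I would show that $\e_k$ is uncorrelated with the desired symbols, i.e.\ $\EEE\{ \e_k \q_k^H \} = \0$. The only term of $\e_k$ involving $\q_k$ is $(\sqrt{\rho_{\text{d}}} \V_k \G_k^T \W_k - \EEE\{ \sqrt{\rho_{\text{d}}} \V_k \G_k^T \W_k \}) \q_k$; since the combiner, channel, and precoder factors are independent of $\q_k$ and $\EEE\{ \q_k \q_k^H \} = \II_{N_{\text{d}}}$, its contribution is $(\F_k - \F_k)\II_{N_{\text{d}}} = \0$. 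The interference term carries $\q_{k'}$ with $k' \neq k$, while the remaining terms carry $\s$ and $\n_k$; all three are zero-mean and independent of $\q_k$, so they contribute nothing to $\EEE\{ \e_k \q_k^H \}$ either.

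Then I would invoke the worst-case uncorrelated-additive-noise argument. Because $\q_k$ is Gaussian and $\e_k$ is zero-mean, uncorrelated with $\q_k$, and has covariance $\E_k$, the mutual information $I(\q_k; \rr_k)$ is lower bounded by the value attained when $\e_k$ is replaced by a circularly-symmetric $\CN(\0, \E_k)$ vector independent of $\q_k$, since Gaussian noise is the worst case among all distributions with a fixed covariance uncorrelated with a Gaussian input. For the resulting Gaussian single-user MIMO channel $\rr_k = \F_k \q_k + \e_k$ with $\e_k \sim \CN(\0, \E_k)$, the mutual information equals $\log_2 \det(\II_{N_{\text{d}}} + \F_k^H \E_k^{-1} \F_k)$, which one obtains by whitening via $\E_k^{-1/2}$ and applying the standard Gaussian MIMO capacity formula. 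Multiplying by $\tau_{\text{d}}/\tau_{\text{c}}$ gives \eqref{SE_dl}.

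The main obstacle is the careful justification of the worst-case-noise step. I must argue that $\E_k$ is invertible: its last term $\EEE\{ \V_k (\rho_{\text{u}} \HH_k \HH_k^H + \II_{N_{\text{d}}}) \V_k^H \}$ contains $\EEE\{ \V_k \V_k^H \} \succ \0$ whenever $\V_k$ is a.s.\ full rank, while the $\rho_{\text{d}}$ terms form a positive-semidefinite residual covariance, so $\E_k \succ \0$. I must also be clear that only uncorrelatedness of $\e_k$ and $\q_k$, not full independence, is needed — this is exactly the hypothesis of the worst-case lemma. Once the zero-mean and uncorrelatedness conditions are in hand, the proof reduces to a single application of that lemma.
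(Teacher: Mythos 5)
Your proof is correct, but it takes a different route from the paper's. The paper does not re-derive the information-theoretic bound: it invokes \cite[Theorem 2]{Li2016ICT}, which states an achievable SE of the form $\tfrac{\tau_{\text{d}}}{\tau_{\text{c}}}\log_2\det(\II_{N_{\text{d}}}+\F_k^H\mybar{\boldsymbol{\Xi}}_k\F_k)$ with $\mybar{\boldsymbol{\Xi}}_k=(\boldsymbol{\Xi}_k^{-1}-\F_k\F_k^H)^{-1}$ and $\boldsymbol{\Xi}_k$ the LMMSE receiver matrix; the whole proof then consists of computing $\boldsymbol{\Xi}_k=(\F_k\F_k^H+\E_k)^{-1}$ from the standard LMMSE formula and observing that $\mybar{\boldsymbol{\Xi}}_k=\E_k^{-1}$. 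You instead prove the underlying bound directly: you verify that $\e_k$ is zero mean and uncorrelated with $\q_k$ (the paper never states the latter explicitly, though it is implicitly required by the cited theorem) and then apply the worst-case uncorrelated-additive-noise argument to the effective channel $\rr_k=\F_k\q_k+\e_k$. The two arguments are really the same bound viewed from opposite ends --- the cited theorem is itself established by a use-and-then-forget/worst-case-noise argument --- so what you gain is self-containedness and an explicit check of the hypotheses, at the cost of needing to cite or prove the worst-case-Gaussian-noise lemma yourself; what the paper gains is brevity.

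One detail to fix: your invertibility remark justifies $\EEE\{\V_k\V_k^H\}\succ\0$ by assuming $\V_k$ is almost surely full rank, but the combiner actually proposed in \eqref{Vk_design} is an orthogonal projection of rank $N_{\text{d}}-C<N_{\text{d}}$, so that hypothesis fails. The conclusion still holds for that combiner, because the discarded left singular vectors of $\hat\HH_k$ are isotropically distributed, giving $\EEE\{\V_k\V_k^H\}=\EEE\{\V_k\}=(1-C/N_{\text{d}})\II_{N_{\text{d}}}\succ\0$; the correct sufficient condition for your argument is $\EEE\{\V_k\V_k^H\}\succ\0$, not almost-sure full rank of $\V_k$.
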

\begin{proof}
     See Appendix \ref{proof:DLSE}. 
\end{proof}

\subsection{Combining Matrix Design for UUI Mitigation}
We propose to mitigate the UUI by using the UUI channel estimate to design the combining matrices $\V_k$ such that
\begin{align}
    \V_k \hat\HH_k \approx \0_{N_{\text{d}} \times N_{\text{u}}}.
\end{align}
This can be achieved with a projection matrix that projects onto the orthogonal complement of the dominant left singular vectors of $\hat\HH_k$. Specifically, take the economy-size singular value decomposition of $\hat\HH_k$ as $\hat\HH_k = \A_k \D_k \B_k^H,$ where $\D_k$ is a diagonal matrix with the non-zero singular values of $\hat\HH_k$ sorted in descending order on the diagonal, while $\A_k$ and $\B_k$ consist of the corresponding left and right singular vectors, respectively. Denote by $\aaa_{k,i} \in \C^{N_{\text{d}} \times 1}$ the $i$th left singular vector, i.e., the $i$th column of $\A_k$. Then, we construct the combining matrices such that they project onto the orthogonal complement of the $C$ dominant left singular vectors of $\hat\HH_k$, where $C \leq \min \{ N_{\text{d}}-1, N_{\text{u}} \}$:
\begin{align}
    \label{Vk_design}
    \V_k = \II_{N_{\text{d}}} - \sum_{i=1}^C \aaa_{k,i} \aaa_{k,i}^H.
\end{align}

\section{Baseline Scenarios}

To evaluate the performance of the proposed UUI mitigation scheme, the following two baselines are considered for comparison. The first baseline presents the scenario where perfect UUI mitigation is achieved, and the second baseline considers the scenario where the UUI is not mitigated. 

\subsection{Perfect UUI Mitigation (Genie)}

The received signal at DL user $k \in \K$ after subtraction of the UUI is given by
\begin{align}
    \bar \y_k \triangleq \y_k - \sqrt{\rho_{\text{u}}} \HH_k \s = \bar\F_k \q_k + \bar\e_k,
\end{align}
where
\begin{align}
    \label{barFk}
    \bar\F_k \triangleq \sqrt{\rho_{\text{d}}} \EEE \Big\{ \G_k^T \W_k \Big\},
\end{align}
and $\bar\e_k$ is the zero mean effective noise with covariance matrix
\begin{align}
    \label{barEk}
    \nonumber
    \bar\E_k & \triangleq \Cov \{ \bar\e_k \} 
    \\
    \nonumber
    & = \rho_{\text{d}} \EEE \Bigg\{ \G_k^T \left( \sum_{k' \in \K} \W_{k'} \W_{k'}^H
    \right) \G_k^* \Bigg\}
    \\
    & \quad - \rho_{\text{d}} \EEE \{ \G_k^T \W_k \} \EEE \{ \W_k^H \G_k^* \} + \II_{N_\text{d}}.
\end{align}
The desired symbols $\q_k$ are detected from $\bar\y_k$. An achievable SE is given in the following Theorem.

\begin{theorem}
    \label{theorem:DLSE_genie}
    An achievable SE for DL user $k \in \K$ with perfect UUI mitigation is
    \begin{align}
        \label{SE_dl_genie}
        \mathtt{SE}_{k}^{\mathtt{g}} = \dfrac{\tau_{\text{d}}}{\tau_{\text{c}}} \log_2 \left( \det \left( \II_{N_{\text{d}}} + \bar\F_k^H \bar\E_k^{-1} \bar\F_k \right) \right) \text{ bits/s/Hz},
    \end{align}
    where $\mybar\F_k$ and $\mybar\E_k$ are given by \eqref{barFk} and \eqref{barEk}, respectively.
\end{theorem}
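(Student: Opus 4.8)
The plan is to follow the same line of reasoning as in the proof of Theorem~\ref{theorem:DLSE}, specialized to the genie setting in which the combiner is the identity ($\V_k = \II_{N_{\text{d}}}$) and the UUI term $\sqrt{\rho_{\text{u}}} \HH_k \s$ is removed before detection. After the genie subtraction, the observation is the deterministic-channel MIMO model
\begin{align}
    \nonumber
    \bar\y_k = \bar\F_k \q_k + \bar\e_k,
\end{align}
where $\bar\F_k = \sqrt{\rho_{\text{d}}}\, \EEE\{\G_k^T \W_k\}$ is a \emph{deterministic} effective channel known at the receiver, $\q_k \sim \CN(\0_{N_{\text{d}} \times 1}, \II_{N_{\text{d}}})$ is the Gaussian information vector, and $\bar\e_k$ is the effective noise.

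First I would verify that $\bar\e_k$ qualifies as a legitimate additive-noise term for the mutual-information bound, i.e., that $\EEE\{\bar\e_k\} = \0_{N_{\text{d}} \times 1}$ and that $\bar\e_k$ is uncorrelated with the useful symbol, $\EEE\{\bar\e_k \q_k^H\} = \0$. The zero mean and the covariance $\bar\E_k$ in \eqref{barEk} follow exactly as in Lemma~\ref{lemma:effnoise}, now with $\V_k = \II_{N_{\text{d}}}$ and the UUI contribution dropped. The uncorrelatedness holds because the self-interference component of $\bar\e_k$ is $(\sqrt{\rho_{\text{d}}}\,\G_k^T \W_k - \bar\F_k)\q_k$, whose coefficient has zero mean by the very definition of $\bar\F_k$ as an expectation, while the remaining multi-user terms involve symbols $\q_{k'}$, $k' \neq k$, that are independent of $\q_k$.

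Next I would invoke the standard worst-case-uncorrelated-noise argument. Since $\bar\F_k$ is deterministic and known at the receiver, $\q_k$ is Gaussian, and $\bar\e_k$ is zero mean, uncorrelated with $\q_k$, and has covariance $\bar\E_k$, the mutual information $I(\q_k; \bar\y_k)$ is lower bounded by its value when $\bar\e_k$ is replaced by a circularly symmetric Gaussian vector of the same covariance, independent of $\q_k$. This gives
\begin{align}
    \nonumber
    I(\q_k; \bar\y_k) \geq \log_2 \det\!\left( \II_{N_{\text{d}}} + \bar\F_k \bar\F_k^H \bar\E_k^{-1} \right) = \log_2 \det\!\left( \II_{N_{\text{d}}} + \bar\F_k^H \bar\E_k^{-1} \bar\F_k \right),
\end{align}
where the last equality uses $\det(\II + \A\B) = \det(\II + \B\A)$. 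Multiplying by the pre-log factor $\tau_{\text{d}}/\tau_{\text{c}}$, which accounts for the fraction of each coherence interval spent on data transmission, yields \eqref{SE_dl_genie}.

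The main obstacle is not the algebra but the information-theoretic justification of the middle step: one must argue that, conditioned on the receiver knowing only the deterministic $\bar\F_k$ (channel hardening), Gaussian noise is the worst case among all uncorrelated additive noises of covariance $\bar\E_k$, so that the bound remains valid even though $\bar\e_k$ is in general non-Gaussian and statistically dependent on $\q_k$ beyond second order. I would dispatch this by citing the established capacity-bounding technique for this deterministic-equivalent model, the same one underpinning Theorem~\ref{theorem:DLSE}, observing that the genie case is precisely its $\V_k = \II_{N_{\text{d}}}$, UUI-free specialization.
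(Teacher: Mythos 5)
Your proposal is correct, and it lands on the same underlying bound as the paper; the only real difference is presentational. The paper's (omitted) proof is the one for Theorem~\ref{theorem:DLSE} with $\V_k = \II_{N_{\text{d}}}$ and the UUI term deleted: it cites \cite[Theorem 2]{Li2016ICT} for the SE formula $\tfrac{\tau_{\text{d}}}{\tau_{\text{c}}}\log_2\det(\II_{N_{\text{d}}} + \F_k^H \mybar{\boldsymbol{\Xi}}_k \F_k)$, computes the LMMSE estimator of $\q_k$ from the observation to identify $\boldsymbol{\Xi}_k = (\bar\F_k\bar\F_k^H + \bar\E_k)^{-1}$, and then simplifies $\mybar{\boldsymbol{\Xi}}_k = (\boldsymbol{\Xi}_k^{-1} - \bar\F_k\bar\F_k^H)^{-1} = \bar\E_k^{-1}$. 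You instead unpack the content of that cited theorem directly: you verify $\EEE\{\bar\e_k\}=\0_{N_{\text{d}}\times 1}$ and $\EEE\{\bar\e_k\q_k^H\}=\0$ (both correctly argued from the definition of $\bar\F_k$ as an expectation and the independence of the symbol vectors), and then invoke the worst-case-uncorrelated-Gaussian-noise lower bound on $I(\q_k;\bar\y_k)$ for a deterministic effective channel, finishing with Sylvester's determinant identity. This buys self-containedness at the price of having to justify the worst-case-noise step yourself — which you correctly flag as the one nontrivial ingredient and would discharge by citation, exactly as the paper does. One small caveat: you describe the genie case as the ``$\V_k=\II_{N_{\text{d}}}$, UUI-free specialization''; setting $\V_k=\II_{N_{\text{d}}}$ alone gives the na\"ive Theorem~\ref{theorem:DLSE_naive}, so the subtraction of $\sqrt{\rho_{\text{u}}}\HH_k\s$ is the essential additional step (which you do state), and it is what removes the $\rho_{\text{u}}\EEE\{\HH_k\HH_k^H\}$ term from the noise covariance to yield $\bar\E_k$ in \eqref{barEk}.
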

\begin{proof}
    Similar to that of Theorem \ref{theorem:DLSE}, and hence, omitted.
\end{proof}

\subsection{No UUI Mitigation (Na\"ive)}

Since no UUI mitigation is performed, the desired symbols $\q_k$ are detected from $\y_k$. An achievable SE is given in the following Theorem.
\vspace{-2mm}
\begin{theorem}
    \label{theorem:DLSE_naive}
    An achievable SE for DL user $k \in \K$ without UUI mitigation is
    \begin{align}
        \label{SE_dl_naive}
        \mathtt{SE}_{k}^{\mathtt{n}} = \dfrac{\tau_{\text{d}}}{\tau_{\text{c}}} \log_2 \left( \det \left( \II_{N_{\text{d}}} + \bar\F_k^H \check\E_k^{-1} \bar\F_k \right) \right) \text{ bits/s/Hz},
    \end{align}
    where $\bar\F_k$ is given in \eqref{barFk} and
    \begin{align}
        \check\E_k & \triangleq \bar\E_k + \rho_{\text{u}} \EEE \{ \HH_{k} \HH_k^H \}.
    \end{align}
\end{theorem}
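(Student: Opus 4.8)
The plan is to follow the same achievability argument used in Theorem~\ref{theorem:DLSE} and Theorem~\ref{theorem:DLSE_genie}, specialized to the case where no combining is performed, so that detection is carried out directly on $\y_k$ (equivalently, setting $\V_k = \II_{N_\text{d}}$). The cleanest route exploits the fact that the na\"ive received signal differs from the genie received signal only by the retained UUI term, i.e., $\y_k = \bar\y_k + \sqrt{\rho_{\text{u}}} \HH_k \s$. First I would write $\y_k = \bar\F_k \q_k + \check\e_k$ with the \emph{same} average effective channel $\bar\F_k = \sqrt{\rho_{\text{d}}} \EEE\{ \G_k^T \W_k \}$ as in \eqref{barFk}, and effective noise $\check\e_k = \bar\e_k + \sqrt{\rho_{\text{u}}} \HH_k \s$, where $\bar\e_k$ is the genie effective noise of Theorem~\ref{theorem:DLSE_genie}. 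Since the UUI term carries no $\q_k$ and hence lands entirely in the effective noise, the signal/effective-channel term is unchanged, and no new computation is required for $\bar\F_k$.

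Next I would compute the covariance of $\check\e_k$. Because the UL data symbol $\s$ is zero-mean with uncorrelated unit-variance entries, $\EEE\{ \s \s^H \} = \II_{N_\text{u}}$, and because $\s$ and the UUI channel $\HH_k$ are independent of the DL symbols $\{ \q_{k'} \}$, the DL channels $\{ \G_{k'} \}$, the precoders $\{ \W_{k'} \}$ (which depend only on the user-to-AP estimates), and the DL noise $\n_k$, the added term $\sqrt{\rho_{\text{u}}} \HH_k \s$ is zero-mean and uncorrelated with both $\q_k$ and $\bar\e_k$. Therefore the covariance simply adds:
\begin{align}
    \nonumber
    \check\E_k & = \Cov\{ \bar\e_k \} + \rho_{\text{u}} \EEE\{ \HH_k \s \s^H \HH_k^H \} \\
    & = \bar\E_k + \rho_{\text{u}} \EEE\{ \HH_k \HH_k^H \},
\end{align}
which is exactly the claimed $\check\E_k$, with $\bar\E_k$ given by \eqref{barEk}. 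I would also confirm $\EEE\{ \check\e_k \} = \0_{N_\text{d} \times 1}$, which is immediate from $\EEE\{ \bar\e_k \} = \0_{N_\text{d} \times 1}$ together with $\EEE\{ \HH_k \s \} = \0_{N_\text{d} \times 1}$ by the independence and zero mean of $\s$.

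Finally, I would invoke the same achievable-rate bound as in Theorem~\ref{theorem:DLSE}: treating the zero-mean effective noise $\check\e_k$, which is uncorrelated with $\q_k$, as worst-case Gaussian with covariance $\check\E_k$, and using the deterministic effective channel $\bar\F_k$, yields the per-sample mutual-information lower bound $\log_2 \det( \II_{N_\text{d}} + \bar\F_k^H \check\E_k^{-1} \bar\F_k )$; scaling by the prelog factor $\tau_{\text{d}}/\tau_{\text{c}}$ gives \eqref{SE_dl_naive}. The only step requiring genuine care — and the main obstacle — is verifying that $\check\e_k$ is uncorrelated with $\q_k$ (so that the bound applies) and that the cross-term between $\bar\e_k$ and $\sqrt{\rho_{\text{u}}} \HH_k \s$ vanishes in the covariance; both follow from the independence of $\s$ and $\HH_k$ from the DL quantities, after which the remainder is routine bookkeeping inherited from Theorems~\ref{theorem:DLSE} and~\ref{theorem:DLSE_genie}.
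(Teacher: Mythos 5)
Your proposal is correct and is essentially the paper's own argument: the paper simply specializes Theorem~\ref{theorem:DLSE} by setting $\V_k = \II_{N_{\text{d}}}$, under which $\F_k$ reduces to $\bar\F_k$ and the covariance in \eqref{Ek} reduces to $\bar\E_k + \rho_{\text{u}} \EEE\{\HH_k\HH_k^H\}$, exactly as you verify. Your alternative framing via the genie decomposition $\y_k = \bar\y_k + \sqrt{\rho_{\text{u}}}\HH_k\s$ is an equivalent bookkeeping of the same facts (you even note the equivalence to $\V_k=\II_{N_{\text{d}}}$ yourself), and your checks that the UUI term is zero-mean and uncorrelated with $\q_k$ and $\bar\e_k$ are the right ones.
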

\begin{proof}
     Follows from that of Theorem \ref{theorem:DLSE} by letting $\V_k = \II_{N_{\text{d}}}$.
\end{proof}

\vspace{-0mm}
\section{Numerical results}
\label{sec:sims}

Consider a MIMO system with $M=8$, $K=1$, $N_{\text{d}} = 4$, $N_{\text{u}} = 2$, and $\tau_{\text{c}} = 200$. The pilot sequence lengths are chosen as $\tau_{\text{p1}} = K N_{\text{d}} = 4$ and $\tau_{\text{p2}} = N_{\text{u}} = 2$, respectively. We randomly distribute the users and the APs in a square of $250$ m $\times$ $250$ m, with wrapped around edges. The large-scale fading coefficients $\{ \alpha_k \}$ and $\{ \beta_k \}$ are calculated from the outdoor 3GPP Urban Microcell model \cite[Eqs. (37), (38)]{emil20TWC}. 

We consider combining matrices $\{ \V_k \}$ constructed with $C=1$ and $C=2$, respectively. Maximum-ratio precoding and equal power allocation is applied in the DL by letting $\W_k = \hat\G_k^* \Big/ \sqrt{ K \EEE \{ \tr ( \hat\G_k^* \hat\G_k^T ) \}}$.

Define the median signal-to-noise ratios (SNRs) observed at the receiver side when transmitting in UL, in DL, and over the UUI channels, as $\mathtt{SNR}_{\text{u}}$, $\mathtt{SNR}_{\text{d}}$, and $\mathtt{SNR}_{\text{uui}}$. We consider a scenario where the UUI SNR is relatively high, by choosing $\mathtt{SNR}_{\text{u}} = \mathtt{SNR}_{\text{d}} = \mathtt{SNR}_{\text{uui}} = 20$ dB.
In Fig. \ref{Fig:SE_CDF}, we present the cumulative distribution functions (CDFs) of the SEs achieved by the proposed method and the two baselines over 1000 channel realizations. It can be seen that the impact of UUI is major as the genie heavily outperforms the other methods. Our proposed method outperforms the na\"ive method in, roughly, $50\%$ of the cases for both $C=1$ and $C=2$. Furthermore, it is observed that the increase in the $80\%$-likely SE with our method, compared to the na\"ive method, is approximately doubled with $C=1$ and almost five-fold with $C=2$, respectively. However, the peak SEs are heavily deteriorated, especially for $C=2$. The reason is that when the proposed combining matrix is applied, it unintentionally cancels parts of the desired signal as well. Hence, when the UUI is small, the loss in the desired signal can typically not be compensated for by mitigating the UUI. 

\begin{figure}[!t]
    \centering
    \input{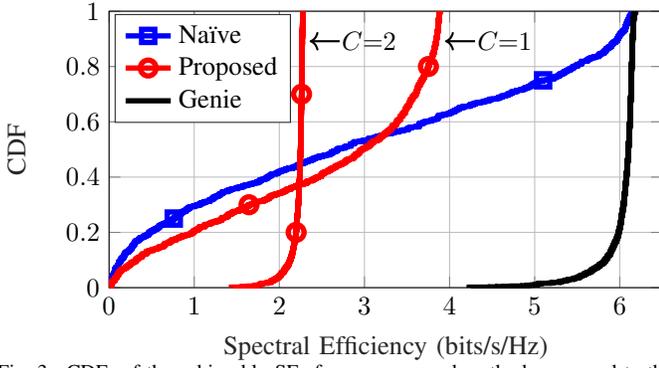}
    \vspace{-5mm}
    \caption{CDFs of the achievable SEs for our proposed method compared to the baseline scenarios without UUI mitigation and with perfect UUI mitigation.}
    \label{Fig:SE_CDF}
\end{figure}

\begin{figure}[!t]
    \centering
    \begin{tikzpicture}

\begin{axis}[%
width=0.4\textwidth,
height=0.2\textwidth,
at={(0.25in,0.481in)},
scale only axis,
xmin=0,
xmax=40,
xlabel style={font=\color{white!15!black}},
xlabel={$\text{UUI SNR, $\rho_{\text{u}} \beta_k$  (dB)}$},
ymode=log,
ymin=0.1,
ymax=100,
yminorticks=true,
ylabel style={font=\color{white!15!black}},
ylabel={
Rel. SE, $\mathtt{SE}_{k}/\mathtt{SE}_{k}^{\mathtt{n}}$},
xmajorgrids,
ymajorgrids,
legend style={at={(0.01,0.98)}, anchor=north west, legend cell align=left, align=left, draw=white!15!black}
]

\addplot [color=blue, line width=2.0pt, mark size=2.5pt, mark=square, mark options={solid, blue}, mark repeat=4, mark phase=3]
  table[row sep=crcr]{%
0	0.504240807956891\\
5	0.56496471602795\\
10	0.764834773515024\\
15	1.39766139973656\\
20	3.02139545957951\\
25	9.04866491600269\\
30	28.3713715730019\\
35	92.7074160264615\\
40	256.256903250569\\
};
\addlegendentry{10\%-likely}

\addplot [color=red, line width=2.0pt, mark size=3.0pt, mark=o, mark options={solid, red}, mark repeat=3, mark phase=5]
  table[row sep=crcr]{%
0	0.3974038228917\\
5	0.39574342282878\\
10	0.425834055708564\\
15	0.507419579805679\\
20	0.757485650205043\\
25	1.6015576720619\\
30	4.36574491950099\\
35	14.0408310278499\\
40	36.5624696464854\\
};
\addlegendentry{50\%-likely (median)}

\addplot [color=black, line width=2.0pt]
  table[row sep=crcr]{%
0	0.370624774536914\\
5	0.369971382797592\\
10	0.370960410474641\\
15	0.373859157550448\\
20	0.37921558110828\\
25	0.40746894458566\\
30	0.528282071871685\\
35	0.86732442352561\\
40	1.53609881409505\\
};
\addlegendentry{90\%-likely}

\end{axis}

\end{tikzpicture}%
    \vspace{-5mm}
    \caption{The $10\%$-likely, $50\%$-likely, and $90\%$-likely relative SEs of our method with $C=2$ compared to the na\"ive method, for varying UUI SNRs.}
    \label{Fig:CDF_rel}
\end{figure}
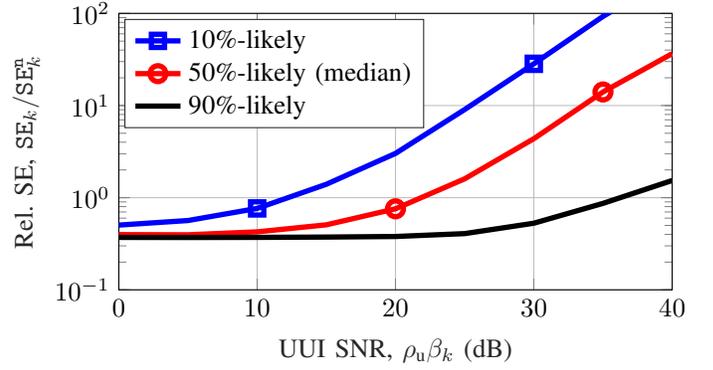

To further highlight the circumstances under which our method outperforms the na\"ive method, we now consider the relative performance $\mathtt{SE}_{k}/\mathtt{SE}_{k}^{\mathtt{n}}$. We restrict ourselves to $C=2$, and calculate the $10\%$-likely, $50\%$-likely (median), and $90\%$-likely relative SEs, for varying UUI SNRs $\rho_{\text{u}} \beta_k$. The results are presented in Fig. \ref{Fig:CDF_rel}, which shows that applying our method is more likely to improve the SE than not (i.e., the median exceeds $1$), when the UUI SNR is larger than approximately $20$ dB, which is in accordance with the results from Fig. \ref{Fig:SE_CDF}. It is also seen that our method will almost surely (i.e., in $90\%$ of cases) improve the SE when the UUI SNR exceeds roughly $35$ dB. Furthermore, in the same SNR region, the SE improvement is more than ten-fold in $50\%$ of cases. In contrast, we see that the SE is more than $90\%$ likely to be deteriorated by applying UUI mitigation when the UUI SNR falls below $10$ dB.

To summarize, our method for UUI mitigation can greatly improve DL SEs under heavy UUI. This leads to the question: ``What are the specific circumstances under which it is beneficial to apply our UUI mitigation method while avoiding deteriorated peak SEs?''. The answer to this open question is left for future work.

\section{Conclusions}

We propose a method for user-to-user interference mitigation in MIMO systems with DTDD operation. Our method introduces a phase for user-to-user channel estimation in each coherence interval, without increasing the overall resources spent on channel estimation. The obtained channel estimates are then used by each interfered DL user to design a combining matrix for UUI mitigation. We analytically derive an SE expression that is achievable with our proposed method. Numerical results show that our method can significantly improve the DL SEs in scenarios with severe UUI. However, it decreases SEs in other cases, and we emphasize the need for future work to investigate in detail when it is beneficial to apply UUI mitigation.
\appendix
\section{}

\subsection{Proof of Lemma \ref{lemma:effnoise}}
\label{Derivation_Ek}

\vspace{1mm}
Since all $\q_k$, $\s$, and $\n_k$ are zero mean and independent of all $\V_k$, $\G_k$, $\W_k$, and $\HH_k$, we immediately get that
\begin{align}
    \EEE \{ \e_k \} = \0_{N_{\text{d}} \times 1}.
\end{align}
Using the same properties and that all $\q_k$, $\s$, and $\n_k$ are mutually uncorrelated one can show that all terms in $\e_k$ are uncorrelated, and hence we have 
\begin{align}
    \nonumber
    \E_k & = \Cov \{ \e_k \}
    \\
    \nonumber
    \\
    \nonumber
    \\
    \nonumber
    & = \cov \Big\{ \Big( \sqrt{\rho_{\text{d}}} \V_k \G_k^T \W_k
    \\
    \label{BU}
    & \qquad \qquad \quad - \EEE \Big\{ \sqrt{\rho_{\text{d}}} \V_k \G_k^T \W_k \Big\} \Big) \q_k \Big\}
    \\
    \label{IUI}
    & \quad + \sum_{k' \in \K \setminus \{ k \}} \cov \Big\{ \sqrt{\rho_{\text{d}}} \V_k \G_k^T \W_{k'} \q_{k'} \Big\}
    \\
    \label{UUI}
    & \quad + \cov \{ \sqrt{\rho_{\text{u}}} \V_k \HH_{k} \s \} 
    \\
    \label{RN}
    & \quad + \cov \{ \V_k \n_{k} \}.
\end{align}

We now calculate each of these terms.

\subsubsection{Calculating \eqref{RN}}

We use the property
\begin{align}
    \Cov \{ \V_k \n_k \} & = \EEE \{ \Cov \{ \V_k \n_k | \V_k \} \}.
\end{align}
Note that, given $\V_k$, the conditional distribution of $\V_k \n_k$ is $\CN ( \0_{N_{\text{d}} \times 1}, \V_k \V_k^H )$ and hence
\begin{align}
    \label{Cov_Vn}
    \Cov \{ \V_k \n_k \} & = \EEE \{ \V_k \V_k^H \}.
\end{align}

\subsubsection{Calculating \eqref{UUI}}

In the same way, we get
\begin{align}
    \label{Cov_VHs}
    \nonumber \Cov \{ \sqrt{\rho_{\text{u}}} \V_k \HH_k \s \} & = \rho_{\text{u}} \EEE \{ \Cov \{ \V_k \HH_k \s | \V_k \HH_k \} \}
    \\
    & = \rho_{\text{u}} \EEE \{ \V_k \HH_{k} \HH_k^H \V_k^H \}
\end{align}

\subsubsection{Calculating \eqref{IUI}}
Again, a similar calculation gives
\begin{align}
    & \nonumber \Cov \{ \sqrt{\rho_{\text{d}}} \V_k \G_k^T \W_{k'} \q_{k'} \}
    \\
    \nonumber
    & = \rho_{\text{d}} \EEE \{ \Cov \{ \V_k \G_k^T \W_{k'} \q_{k'} | \V_k \G_k^T \W_{k'} \} \}
    \\
    & = \rho_{\text{d}} \EEE \{ \V_k \G_k^T \W_{k'} \W_{k'}^H \G_k^* \V_k^H \}.
\end{align}
Hence,
\begin{align}
    \label{Cov_VGWq}
    \nonumber
    & \sum_{k' \in \K \setminus \{ k \}} \cov \Big\{ \sqrt{\rho_{\text{d}}} \V_k \G_k^T \W_{k'} \q_{k'} \Big\}
    \\
    \nonumber
    & = \rho_{\text{d}} \sum_{k' \in \K \setminus \{ k \}} \EEE \{ \V_k \G_k^T \W_{k'} \W_{k'}^H \G_k^* \V_k^H \}
    \\
    & = \rho_{\text{d}} \EEE \Bigg\{ \V_k \G_k^T \left( \sum_{k' \in \K \setminus \{ k \}} \W_{k'} \W_{k'}^H \right) \G_k^* \V_k^H \Bigg\}.
\end{align}

\subsubsection{Calculating \eqref{BU}}

Define 
\begin{align}
    \label{Qk}
    \Q_k \triangleq \Big( \sqrt{\rho_{\text{d}}} \V_k \G_k^T \W_k - \EEE \Big\{ \sqrt{\rho_{\text{d}}} \V_k \G_k^T \W_k \Big\} \Big).
\end{align}
This gives
\begin{align}
    \label{CovBU}
    \nonumber
    \Cov \{ \Q_k \q_k \} & = \EEE \{ \Cov \{ \Q_k \q_k | \Q_k \} \}
    \\
    & = \EEE \{ \Q_k \Q_k^H \}.
\end{align}
Let $\bar\Q_k \triangleq \sqrt{\rho_{\text{d}}} \V_k \G_k^T \W_k$, such that $\Q_k = \bar\Q_k - \EEE \{ \bar\Q_k \}$. By substituting this into
\eqref{CovBU} we get
\begin{align}
    \label{Cov_Qq}
    \nonumber
    \EEE \{ \Q_k^H \Q_k \} & = \EEE \{ (\bar\Q_k - \EEE \{ \bar\Q_k \})(\bar\Q_k - \EEE \{ \bar\Q_k \})^H \}
    \\
    \nonumber
    & = \EEE \{ \bar\Q_k \bar\Q_k^H \} - \EEE \{ \bar\Q_k \} \EEE \{ \bar\Q_k \}^H
    \\
    \nonumber
    & = \rho_{\text{d}} ( \EEE \{ \V_k \G_k^T \W_k \W_k^H \G_k^* \V_k^H \} 
    \\
    & \qquad - \EEE \{ \V_k \G_k^T \W_k \} \EEE \{ \W_k^H \G_k^* \V_k^H \}).
\end{align}

\subsubsection{Collect all terms to obtain the final expression}
\begin{align}
    \nonumber
    \E_k & = \rho_{\text{d}} \EEE \Bigg\{ \V_k \G_k^T \left( \sum_{k' \in \K} \W_{k'} \W_{k'}^H \right) \G_k^* \V_k^H \Bigg\}
    \\
    \nonumber
    & \quad - \rho_{\text{d}} \EEE \{ \V_k \G_k^T \W_k \} \EEE \{ \W_k^H \G_k^* \V_k^H \}
    \\
    & \quad + \EEE \{ \V_k ( \rho_{\text{u}} \HH_k \HH_k^H + \II_{N_{\text{d}}} ) \V_k^H \}.
\end{align}
\vspace{-8mm}
\subsection{Proof of Theorem~\ref{theorem:DLSE}}
\label{proof:DLSE}
\vspace{-3mm}
From \cite[Theorem 2]{Li2016ICT}, an achievable SE is given by 
\begin{align}
    \label{achievablerate}
    \mathtt{SE}_{k} = \dfrac{\tau_{\text{d}}}{\tau_{\text{c}}} \log_2 \left( \det \left( \II_{N_\text{d}} + \F_k^H \mybar{\boldsymbol{\Xi}}_k \F_k \right) \right),
\end{align}
with $\F_k$ as in \eqref{Fk} and $\mybar{\boldsymbol{\Xi}}_k$ defined by
\begin{align}
    \label{Xibar}
    \mybar{\boldsymbol{\Xi}}_k \triangleq \left( \boldsymbol{\Xi}_k^{-1} - \F_k \F_k^H \right)^{-1},
\end{align}
where $\boldsymbol{\Xi}_k$ is chosen such that the linear MMSE (LMMSE) estimate $\hat\q_k$ of $\q_k$ given $\rr_k$ is
\begin{align}
    \hat\q_k = \F_k^H \boldsymbol{\Xi}_k \rr_k.
\end{align}
By using \cite[Equation (15.64)]{kay93} and $\q_k \sim \CN(\0_{N_{\text{d}} \times 1}, \II_{N_{\text{d}}})$, the LMMSE estimate is
\begin{align}
    \hat\q_k = \F_k^H \left( \F_k \F_k^H + \E_k \right)^{-1} \rr_k,
\end{align}
from which we identify
\begin{align}
    \label{Xi}
    \boldsymbol{\Xi}_k = \left( \F_k \F_k^H + \E_k \right)^{-1}.
\end{align}
By inserting \eqref{Xi} into \eqref{Xibar} we get
\begin{align}
    \mybar{\boldsymbol{\Xi}}_k = \E_k^{-1},
\end{align}
which after substitution into \eqref{achievablerate} completes the proof.

\ifCLASSOPTIONcaptionsoff
  \newpage
\fi

\begin{spacing}{1}
\bibliographystyle{IEEEtran}
\bibliography{IEEEabrv,refs}
\end{spacing}

\end{document}